\newcommand{\real}{\mathbb{R}}
\let\bd\partial
\newtheorem{theorem}{Theorem} 
\newtheorem{lemma}[theorem]{Lemma} 
\newtheorem{fact}[theorem]{Fact} 
\theoremstyle{definition}
\newtheorem*{udefinition}{Definition}
\theoremstyle{remark}
\newcommand{\Boris}[2][says]{\textsc{Boris #1}:
  \emph{#2}\ifinner\else\marginpar[\hfill BA
  $\rightarrow$]{$\leftarrow$ BA}\fi}
\newcommand{\Dmitriy}[2][says]{\textsc{Dmitriy #1}:
  \emph{#2}\ifinner\else\marginpar[\hfill DD
  $\rightarrow$]{$\leftarrow$ DD}\fi}
\begin{document}


\author{Boris Aronov\thanks{%
    Department of Computer Science and Engineering,
    Polytechnic Institute of NYU,
    Brooklyn, New York, USA;
    \texttt{aronov@poly.edu}.
    Work of B.A.\ on this paper had been partially supported by a
    grant from the U.S.-Israel Binational Science Foundation and by
    NSA MSP Grant H98230-06-1-0016. B.A.\ blames D.~Halperin for
    provoking him into working on this problem.}
  \and
  Dmitriy Drusvyatskiy\thanks{%
    School of Operations Research and Information Engineering,
    Cornell University,
    Ithaca, New York, USA;
    \texttt{http://people.orie.cornell.edu/dd379}.
    Work of Dmitriy Drusvyatskiy on this paper had been partially supported by the NDSEG grant from the Department of Defense and by the NSF Computatonal Sustainability Grant 0832782.
}}

\title{\Large Complexity of a Single Face in an Arrangement of
  $s$-Intersecting Curves%
  \thanks{This is the version of the paper from June 1, 2009, with
    several typos corrected.}}

\maketitle

\begin{abstract}
  Consider a face $F$ in an arrangement of $n$ Jordan curves in the
  plane, no two of which intersect more than $s$ times.  We prove that
  the combinatorial complexity of $F$ is $O(\lambda_s(n))$,
  $O(\lambda_{s+1}(n))$, and $O(\lambda_{s+2}(n))$, when the curves are
  bi-infinite, semi-infinite, or bounded, respectively;
  $\lambda_k(n)$ is the maximum length of a Davenport-Schinzel
  sequence of order $k$ on an alphabet of $n$ symbols.

  Our bounds asymptotically match the known worst-case lower bounds.
  Our proof settles the still apparently open case of semi-infinite
  curves. Moreover, it treats the three cases in a fairly uniform fashion.
\end{abstract}

 \section{Introduction}
In this paper we study the maximum complexity of a single face in an
arrangement of curves in the plane, no two of which intersect more
than $s$ times; see below. We will do this through an extensive use
of Davenport-Schinzel sequences, which were first introduced by
Davenport and Schinzel in 1965 \cite{original}. They were motivated,
curiously enough, by a problem in differential equations.

\begin{udefinition}
Let $n$, $s$ be positive integers. A sequence $U=\langle
u_1,\ldots,u_m\rangle$ over an alphabet of size $n$ is a
\emph{Davenport-Schinzel sequence} of order $s$ on an alphabet of
$n$ symbols, or \emph{DS(n,s)-sequence}, for short, if it satisfies
the following conditions:
\begin{enumerate}
\item $u_i\neq u_{i+1}$, for each $1\leq i< m$.
\item There do not exist $s+2$ indices
$1\leq i_1<i_2<\ldots<i_{s+2}\leq m$ such that
$u_{i_1}=u_{i_3}=u_{i_5}=\ldots=a$,
$u_{i_2}=u_{i_4}=u_{i_6}=\ldots=b$, for some distinct symbols $a$
and $b$.
\end{enumerate}
We denote by $\lambda_s(n)$ the length of the longest
$DS(n,s)$-sequence.
\end{udefinition}

Davenport and Schinzel were able to establish a connection between
these sequences and lower envelopes of collections of functions
\cite{original,DS-book}. The next significant step in studying these
sequences was taken by Szemer\'edi in 1974, who established improved
upper bounds on the length of Davenport-Schinzel sequences
\cite{Sze}. In 1983, Atalah's work was the first step in
establishing DS-sequences as a fundamental tool in computational and
combinatorial geometry \cite{ata}. The fundamental question that was
still unanswered was determining the asymptotic growth rate of the
functions $\lambda_s(n)$, for any fixed $s$.  For $s=1$ and $s=2$,
this is very easy 
($\lambda_1(n)=n$ and $\lambda_2(n)=2n-1$) but already for $s=3$,
this question is highly nontrivial. In 1986, Hart and Sharir showed
that that the maximum length of any $DS(n,3)$-sequence is
$O(n\alpha(n))$, where $\alpha(n)$ is the very slowly growing
inverse of Ackermann's function \cite{nonlin_DS}. In 1989, Agarwal,
Sharir, and Shor completed this classification by showing nearly
tight, nearly linear bounds for all fixed $s$ \cite{general}.
Davenport-Schinzel sequences have proven to be very useful in
providing tighter methods of analysis for many problems in discrete
and computational geometry \cite{DS-book,survey}.

In this paper, we are interested in the following three types of
curves. An \emph{unbounded Jordan curve} is the image of an open
unit interval under a topological embedding into $\real^2$,
such that it separates the plane. A \emph{semi-infinite Jordan
curve} is the image of a half-open unit interval under a topological
embedding into $\real^2$, such that the image is unbounded with
respect to the standard Euclidean norm. A \emph{bounded Jordan
curve} (or \emph{Jordan arc}) is the image of a closed unit interval
under a topological embedding into $\real^2$.

Let $\Gamma_0$, $\Gamma_1$, and $\Gamma_2$ be collections of $n$
bi-infinite, semi-infinite, and bounded Jordan curves in the plane,
respectively, such that any two curves in $\Gamma_i$ intersect at most
$s$ times, for some fixed constant $s>0$. (The subscript of $\Gamma_i$
signifies the number of finite endpoints of each curve in this
collection.)

\begin{udefinition}\cite{Edels,DS-book}
  The \emph{arrangement} $A(\Gamma_i)$ of $\Gamma_i$ is the planar
  subdivision induced by the arcs of $\Gamma_i$. Thus $A(\Gamma_i)$
  is a planar map whose \emph{vertices} are the endpoints of curves of $\Gamma_i$, if
  any, and their pairwise intersection points. The \emph{edges} are
  maximal connected portions of the curves that do not contain a
  vertex. The \emph{faces} are the connected components of
  $\real^2 - \bigcup \Gamma_i$.
\end{udefinition}

The \emph{combinatorial complexity} of a face $F$ of $A(\Gamma_i)$
is the total number of vertices and edges of $A(\Gamma_i)$ along its
boundary $\bd F$. A feature on $\bd F$ is counted in the
complexity as many times as it appears.

We are interested in studying the maximum combinatorial complexity of a
single face of $A(\Gamma_i)$.  Schwartz and Sharir showed that the
combinatorial complexity of a single face of $A(\Gamma_0)$ is at
most $\lambda_{s}(n)$ \cite{unbounded}.  Sharir \emph{et al.}\@ showed that
the combinatorial complexity of a single face of $A(\Gamma_2)$ is
$O(\lambda_{s+2}(n))$ \cite{DS-sharir}.  (See Nivasch~\cite{Nivasch}
for some very recent progress in this subject.)
However, the two proofs provided are very different, which is somewhat
unsatisfying. 

It has also been conjectured that the combinatorial complexity of a
single face of $A(\Gamma_1)$ is $O(\lambda_{s+1}(n))$.  There has been
some work to suggest that this is true. For instance, Alevizos,
Boissonnat, and Preparata showed that the complexity of a single face
in an arrangement of rays is linear \cite{rays}; this is the case when
$s=1$. In this paper, we prove the following theorem:

\begin{theorem}
  \label{theorem:main}
  The combinatorial complexity of a face $F$ in an arrangement of $n$
  bi-infinite, semi-infinite, or bounded Jordan curves, no two of which intersect more than $s$ times, is
  $O(\lambda_s(n))$, $O(\lambda_{s+1}(n))$, and $O(\lambda_{s+2}(n))$,
  respectively.
\end{theorem}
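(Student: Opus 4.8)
The plan is to bound the complexity of $F$ by the length of a Davenport--Schinzel sequence obtained from a traversal of $\bd F$, and to get all three cases from a single argument whose only case-dependent input is the number of finite endpoints of a curve (zero, one, or two).

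First I would reduce to the case that $\bd F$ is a single simple closed curve. This is the routine part: either one turns $F$ into a simply connected face by a cut-and-compactify operation that does not decrease its complexity and keeps the curves $s$-intersecting, or one runs the main argument separately on the outer component and each ``hole'' of $\bd F$ and bounds the number of holes by a short counting argument. Since the semi-infinite case is the new one, I would check that whichever reduction is used goes through there as well. Fix such a boundary cycle $C$, orient it, traverse it once, and record for each edge the curve carrying it; this produces a circular sequence $U=\langle u_1,\dots,u_m\rangle$ whose length is, up to an additive $O(n)$ term counting vertices, the quantity to be bounded. Deleting immediate repetitions $u_i=u_{i+1}$ --- which occur only near the $O(n)$ finite endpoints, where $C$ may run along a curve, around its endpoint, and back --- costs $O(n)$, which is absorbed since $\lambda_s(n)=\Omega(n)$, and leaves a sequence satisfying condition~1 of the definition.

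The heart of the proof is the claim that $U$ contains no alternating subsequence $a\,b\,a\,b\,\dots$ of length $s+2$ in the bi-infinite case, of length $s+3$ in the semi-infinite case, and of length $s+4$ in the bounded case; then $U$ is, up to the above cleanup, a $DS(n,s)$-, $DS(n,s+1)$-, or $DS(n,s+2)$-sequence, and $|U|=O(\lambda_s(n))$, $O(\lambda_{s+1}(n))$, $O(\lambda_{s+2}(n))$ follow from the definition of $\lambda$. To prove the claim, suppose $\gamma_a$ and $\gamma_b$ realize a long alternation along $C$ via edges $\alpha_1,\beta_1,\alpha_2,\beta_2,\dots$ in cyclic order, with $\alpha_j\subseteq\gamma_a$ and $\beta_j\subseteq\gamma_b$. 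For each $j$ the portion $\pi_j$ of $C$ strictly between $\alpha_j$ and $\alpha_{j+1}$ has both endpoints on $\gamma_a$, and together with the sub-arc of $\gamma_a$ joining them it bounds a ``pocket'' $\sigma_j$ disjoint from $F$. When $\gamma_a$ is bi-infinite it separates the plane, $F$ lies on one side and all the $\sigma_j$ on the other, their bounding $\gamma_a$-arcs are pairwise openly disjoint, and the pockets form a laminar family; since $\beta_j\subseteq\bd\sigma_j$, a counting argument that charges the way $\gamma_b$ runs among the pockets against the at most $s$ points of $\gamma_a\cap\gamma_b$ then yields the bound $s+2$. In the semi-infinite and bounded cases the only new phenomenon is that $\gamma_a$ has one or two endpoints, near which ``the other side of $\gamma_a$'' is undefined and a pocket may wrap around an endpoint of $\gamma_a$ or $\gamma_b$; each such endpoint adds at most one extra term to the alternation, converting $s+2$ into $s+3$ and $s+4$.

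The step I expect to be the main obstacle is precisely this endpoint accounting: making the laminarity of the pockets and the charging argument rigorous, and --- the genuinely delicate point --- showing that the finite endpoints cost only \emph{one} extra unit of alternation per pair of curves in the semi-infinite case (yielding $s+3$, hence $\lambda_{s+1}$, rather than the naive $s+4$ one gets by charging the endpoints of $\gamma_a$ and of $\gamma_b$ separately). I expect this to need a global rather than a purely local count --- for example, setting up the pockets with respect to a single one of the two curves and arguing that the other curve's endpoints are already subsumed by the crossing count, or amortizing over all maximal alternation blocks of $C$ at once. An alternative worth trying is to reduce the semi-infinite and bounded cases to the bi-infinite one by extending each curve past its endpoint(s) to infinity along routes that introduce at most one new crossing per pair near $\bd F$; but making the constant come out to exactly $s+1$, resp.\ $s+2$, that way looks no easier than the direct argument.
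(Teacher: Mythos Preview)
Your outline follows the standard recipe (boundary sequence $\to$ DS sequence), but the paper's execution differs from yours in two places, and one of them is exactly the mechanism that resolves the obstacle you flag.

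\textbf{Oriented sides instead of unoriented labels.} For semi-infinite and bounded curves the paper labels each boundary edge not by the curve $\gamma$ but by the oriented side $\gamma^+$ or $\gamma^-$ it lies on, obtaining a sequence over $2n$ (semi-infinite) or, after an additional circular-to-linear cut, $4n$ (bounded) symbols. The payoff is a \emph{Linear Consistency Lemma}: the appearances of a fixed oriented side occur along $C$ in the same order as along the curve. That feeds a Quadruple Lemma (each consecutive quadruple $\xi\gamma\xi\gamma$ in an alternation forces a distinct $a\cap b$ point), which is close in spirit to your pocket/laminar charging. With oriented sides the semi-infinite case goes as follows: an alternation of length $s+3$ in the right-side subsequence $S_1^R$ yields $s$ crossings from the Quadruple Lemma plus one more by chasing the \emph{infinite} end of $a$ out of the closed contour formed by the last quadruple; hence $S_1^R$ and symmetrically $S_1^L$ are $DS(2n,s+1)$. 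The crucial second step is a short combinatorial fact used to recombine them: if a sequence over $\Sigma_L\cup\Sigma_R$ has no \emph{contiguous} $L/R$ alternation of length $>k$ and each restriction is $DS(n,s)$, then the full sequence has length $O(k\lambda_s(n))$; and a contiguous block $\zeta^-\xi^+\zeta^-\xi^+\cdots$ of length $s+2$ already forces $s+1$ crossings, so $k=O(s)$. This ``split by side, then merge'' is precisely what buys the sharp $s+1$ and replaces the delicate endpoint amortization you anticipate; your direct attempt to show the unoriented sequence is $DS(n,s+1)$ has no consistency lemma to lean on once both sides of a curve can touch $F$.

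\textbf{Bounded faces via a tunnel, not compactification.} Rather than compactify, the paper reduces an arbitrary bounded face to an unbounded one: follow a curve $a$ from infinity (or from an endpoint lying on the unbounded cell, in the Jordan-arc case) to its first hit $p$ on $\bd F$, and at every crossing along the way split the crossed curve in two with a small gap for $a$ to pass through. This enlarges $F$, makes it unbounded, creates at most $O(sn)$ curves, and introduces no new intersections; in the bi-infinite and semi-infinite cases the finite pieces produced are then re-extended to infinity by non-crossing tails hugging $a$, so the curve class is preserved. Then Theorem~\ref{thr:unbounded} for unbounded faces applies directly. Your ``cut-and-compactify or handle holes separately'' would need to be made equally concrete, and in particular to keep the curves bi-infinite (resp.\ semi-infinite) after the reduction.
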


These upper bounds are tight in the worst case. This easily follows
from the fact that the complexity of the lower envelope of the
collection of functions defined by curves of $\Gamma_i$ (these
functions are partially defined for $i = 1$ and for $i=2$), in the
special case where the curves are $x$-monotone, is a lower bound on
the combinatorial complexity of a single face of $A(\Gamma_i)$
\cite{ata,DS-book}; the maximum complexity of such an envelope is
$\Theta(\lambda_s(n))$, $\Theta(\lambda_{s+1}(n))$, and
$\Theta(\lambda_{s+2}(n))$, respectively.  

The result is known for bi-infinite curves and for Jordan arcs \cite{DS-sharir, unbounded}, but not, to the best of our
knowledge, for semi-infinite curves.  The advantage of our proof is
that firstly it settles the previously mentioned conjecture and
secondly it treats all three cases in a reasonably uniform manner.

Our paper is organized as follows. In
Section~\ref{section:combinatorial} we prove a purely combinatorial
auxiliary fact (Fact~\ref{fact:combination}).
Section~\ref{section:preliminaries} contains some preliminary
modifications to the geometric problem.  In
Section~\ref{section:unbounded} we prove bounds on the maximum
complexity of an unbounded face in $A(\Gamma_i)$. Finally, in
Section~\ref{section:arbitrary} we transform any bounded face of
$A(\Gamma_i)$ into an unbounded one without an asymptotic increase in
its complexity. This implies bounds on the maximum complexity of a
bounded face in $A(\Gamma_i)$ and yields our main theorem.

\section{A Combinatorial Fact}
\label{section:combinatorial}

In this section, we state and prove a simple combinatorial fact about
Davenport-Schinzel sequences.  It or a close relative have been ``in
the folklore'' of this area of research~\cite{DS-book}, although we
have not been able to pin down a source where it was explicitly stated
in this form.  For completeness, we present a proof.

\begin{udefinition}
  Given a sequence $S$ over an alphabet $\Sigma$, for $\Lambda
  \subseteq \Sigma$, $S_{|\Lambda}$ denotes the sequence obtained by
  deleting from $S$ all symbols not in $\Lambda$.
\end{udefinition}

\begin{udefinition}
  Let $\Sigma$ be an alphabet. Denote
  by $\Sigma^*$ the set of all finite sequences over $\Sigma$. We  define an operation $\diamond: \Sigma^*
  \longrightarrow \Sigma^*$ as follows. Let $X
  \in \Sigma^*$. $X^\diamond$ is obtained from
  $X$ by simply collapsing each subsequence of consecutive identical
  elements to a single element, e.g., $\langle \ldots
  ,a,b,b,b,b,c,c,d,\ldots \rangle$ would be collapsed to $\langle \ldots
  ,a,b,c,d, \ldots \rangle$.
\end{udefinition}

\begin{udefinition}
  Let $\Sigma_1$ and $\Sigma_2$ be disjoint alphabets, let $k\geq 1$ be an integer, and let
  $X=\langle x_1,\ldots,x_m\rangle$ be a sequence over $\Sigma_1 \cup \Sigma_2$. We say that $X$ is
  \emph{k-friendly under} ($\Sigma_1$,$\Sigma_2$) if the following
  condition holds:

  \begin{itemize}
  \item[($\ast$)] There do not exist $k+1$ consecutive indices $1 \leq
    i,i+1, \ldots,i+k \leq m$ such that
    $x_{i}=x_{i+2}=x_{i+4}=\ldots=a$,
    $x_{i+1}=x_{i+3}=x_{i+5}=\ldots=b$, with $a \in \Sigma_1$ and $b
    \in \Sigma_2$, or vice versa.
  \end{itemize}
\end{udefinition}


\begin{fact} \label{fact:combination}
  If a sequence $X$ is k-friendly under $(\Sigma_1$,$\Sigma_2)$, no two consecutive symbols of $X$ are the same, and
  $(X_{|\Sigma_1})^\diamond$ and $(X_{|\Sigma_2})^\diamond$ are both $DS(n,s)$-sequences, then $|X| = O(k\lambda_s
  (n))$.
\end{fact}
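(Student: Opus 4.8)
The plan is to bound the length of $X$ by charging its symbols, in blocks, against a single auxiliary Davenport–Schinzel sequence of order $s$ over the alphabet $\Sigma_1 \cup \Sigma_2$ (of size at most $2n$). Since $\lambda_s(2n) = O(\lambda_s(n))$ (this is a standard property of the functions $\lambda_s$, following from their near-linearity), it suffices to produce from $X$ a sequence $Y$ that is a $DS(2n,s)$-sequence and satisfies $|X| = O(k|Y|)$. First I would dispose of a trivial reduction: after the hypothesis ``no two consecutive symbols of $X$ are the same'' we may pass freely between $X_{|\Sigma_1}$ and $(X_{|\Sigma_1})^\diamond$ only up to collapsing runs, so I will keep track of maximal runs of same-alphabet symbols. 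Partition $X$ into maximal \emph{blocks}, where a block is a maximal contiguous subsequence all of whose symbols lie in $\Sigma_1$, or all in $\Sigma_2$. Consecutive blocks alternate between the two alphabets. Let $B_1, B_2, \ldots, B_t$ be these blocks; then $t = |Y|$ where $Y$ records which alphabet each block comes from is not yet a DS-sequence, so instead let $Y = \langle y_1, \ldots, y_t \rangle$ with $y_j$ equal to, say, the \emph{first} symbol of block $B_j$.

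The two things I must then show are: (i) each block has length $O(k)$, and (ii) $Y^\diamond$ is a $DS(2n,s)$-sequence, or close enough to one that its length is $O(\lambda_s(n))$. Claim (i) is where the $k$-friendliness hypothesis does its work: within a single block, all symbols lie in one alphabet, so an alternation $a,b,a,b,\ldots$ inside the block with $a \in \Sigma_1, b \in \Sigma_2$ cannot occur — but that's the wrong parity. The correct observation is that a long block in $\Sigma_1$, say, is sandwiched between two symbols of $\Sigma_2$ (or is at the end of $X$); I would look at the block together with one neighbouring symbol on each side and argue that if the block had length $\geq k$ then, because $(X_{|\Sigma_1})^\diamond$ is in particular a sequence with no immediate repetition, the block contains at least two distinct symbols, and then I can extract $k+1$ consecutive indices of $X$ forming a forbidden $abab\cdots$ pattern with the $\Sigma_2$-neighbours interleaved — contradicting ($\ast$). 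This extraction is the delicate combinatorial point and I expect it to be the main obstacle: one has to be careful that the alternating pattern demanded by ($\ast$) uses \emph{consecutive} indices of $X$, so the block cannot have internal repetitions breaking the alternation, which is exactly why one must pass to a sub-block where consecutive symbols differ and alternate between two fixed values — here I may need to first argue that within a block one can always find a long \emph{strictly alternating} contiguous stretch, or else collapse further. If a clean ``length $O(k)$ per block'' fails, the fallback is to bound the \emph{total} excess $\sum_j (|B_j| - 1)$ directly by $O(k)$ times the number of blocks using a global version of the friendliness argument.

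For claim (ii): collapsing $X$ to one representative symbol per block and then applying $\diamond$ yields a sequence over $\Sigma_1 \cup \Sigma_2$ with no two consecutive equal; and any forbidden $DS(2n,s)$ pattern $a,b,a,b,\ldots$ of length $s+2$ in $Y^\diamond$ would, by pulling back block representatives to positions in $X$, give an alternation $a,\ldots,b,\ldots,a,\ldots$ in $X$, hence in $X_{|\Sigma_1}$ if $a,b \in \Sigma_1$, in $X_{|\Sigma_2}$ if $a,b \in \Sigma_2$, and in the mixed case one would use friendliness again — wait, the mixed case $a\in\Sigma_1, b\in\Sigma_2$ is not forbidden by either of the two DS hypotheses, so $Y^\diamond$ is genuinely not a DS-sequence and I must split: write $Y^\diamond = $ interleaving of its $\Sigma_1$-part and $\Sigma_2$-part, observe each of those, after $\diamond$, embeds into $(X_{|\Sigma_i})^\diamond$ and is therefore a $DS(n,s)$-sequence, and finally bound $|Y^\diamond| = O(\lambda_s(n))$ by the \emph{same} Fact~\ref{fact:combination} in the degenerate case $k=1$ (blocks of size $1$), i.e. this is really an induction / self-reference resolved by noting the $k=1$ case is where friendliness is automatic. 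Concretely: $Y^\diamond$ has no two consecutive symbols equal, consecutive symbols come from alternating alphabets so it is trivially $1$-friendly, and $(Y^\diamond_{|\Sigma_i})^\diamond$ is a sub-collapse of $(X_{|\Sigma_i})^\diamond$, hence a $DS(n,s)$-sequence; so the $k=1$ instance of the Fact gives $|Y^\diamond| = O(\lambda_s(n))$. Combining with (i), $|X| = \sum_j |B_j| = O(k) \cdot |Y^\diamond_{\text{uncollapsed block count}}| = O(k\lambda_s(n))$, after checking that collapsing $Y \to Y^\diamond$ only merges a bounded number — in fact at most... — of blocks, which is immediate since two consecutive blocks in different alphabets never share a representative symbol, and within the same alphabet consecutive blocks cannot occur by maximality; so $Y = Y^\diamond$ already and no loss occurs. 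This closes the argument.
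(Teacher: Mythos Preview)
Your block decomposition does not work, and the failure is in claim (i): a single block can have length as large as $\lambda_s(n)$, not $O(k)$. Concretely, take $X = \langle a_1, a_2, \ldots, a_m, b\rangle$ with the $a_j \in \Sigma_1$ pairwise distinct and $b \in \Sigma_2$. This $X$ has no consecutive repeats, is $k$-friendly for every $k \geq 2$ (the pattern in $(\ast)$ requires $x_i = x_{i+2}$, which never happens here), and both restrictions collapse to $DS$-sequences; yet the first block has length $m$, unbounded in $k$. Your proposed extraction of a forbidden $abab\cdots$ pattern cannot succeed because the only $\Sigma_2$-symbols adjacent to a $\Sigma_1$-block sit at its two ends, not interleaved through it, so you cannot manufacture $k+1$ \emph{consecutive} alternating positions. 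Your fallback (bounding $\sum_j(|B_j|-1)$ by $O(k)$ times the number of blocks) fails on the same example: two blocks, excess $m-1$. Finally, your recursive step for (ii) is also broken: $Y$ has every pair of consecutive symbols coming from different alphabets, which is exactly the pattern forbidden by $1$-friendliness with $k=1$ (two consecutive indices with $x_i \in \Sigma_1$, $x_{i+1} \in \Sigma_2$), so $Y$ is maximally \emph{not} $1$-friendly.

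The paper's argument avoids blocks entirely. It writes $|X| = |L| + |R| + \Delta_L + \Delta_R$ where $L = (X_{|\Sigma_1})^\diamond$, $R = (X_{|\Sigma_2})^\diamond$, and $\Delta_L$ counts the symbols of $X_{|\Sigma_1}$ lost under $\diamond$. Each such lost symbol is a repeat $\ldots b \ldots b \ldots$ in $X_{|\Sigma_1}$ with only $\Sigma_2$-symbols between the two $b$'s in $X$; one charges the extra $b$ to a $\Sigma_2$-symbol in that gap, choosing when possible a symbol different from the previous charge. The choice fails only along a contiguous $b,r,b,r,\ldots$ stretch in $X$, whose length is at most $k$ by $k$-friendliness, so each element of $R$ absorbs $O(k)$ charges and $\Delta_L = O(k|R|) = O(k\lambda_s(n))$. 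The point you are missing is that the quantity controlled by $k$-friendliness is not the length of same-alphabet blocks but the number of \emph{collapses} chargeable to a single symbol of the other alphabet.
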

\begin{proof}
  Let $L' = X_{|\Sigma_1}$, $L = (L')^\diamond$, $R' = X_{|\Sigma_2}$,
  and $R = (R')^\diamond$.  It is clear that $|X| = |L| + |R| + (|L'|
  - |L|) + (|R'| - |R|)$. Since $|L|, |R| \leq \lambda_s(n)$, without
  loss of generality, it is sufficient to bound $\Delta_L = |L'| -
  |L|$. $\Delta_L$ is the number of elements that were deleted from
  $L'$ by the $\diamond$ operation. Suppose that a subsequence
  $\langle a,b,\ldots,b,c \rangle$ of $|L'|$ was collapsed to $\langle
  a,b,c \rangle$ in $L$ (collapses at the beginning and at the end of $L$ are
  handled similarly). Now the only way that this could have happened
  is that in $X$, between every two corresponding consecutive elements
  $b$, there was a sequence of one or more elements all from $R'$;
  denote such a sequence by $\xi_i$. Let $T = \langle
  b,\xi_1,b,\xi_2,\ldots,b \rangle$. We charge each element $b$ in $T$
  to an element of $\xi_i$ following it, such that if possible it is
  different from the element of $\xi_{i-1}$ that was charged for the
  previous occurrence of $b$. If it were always possible to do so,
  then all the elements of $R'$ that have been charged would be
  preserved when $R'$ were transformed into $R$ and each one would
  have been charged only once, so we could bound $\Delta_L$ by
  $|R|$. The only time that it is not possible is when there is a
  subsequence of $X$ of the form $\langle \ldots,b,r,b,r,\ldots
  \rangle$, where r is an element of $R$. Since $X$ is
  \emph{k-friendly under} $(\Sigma_1,\Sigma_2)$, the length of such a
  subsequence is no larger than $k$. It now easily follows that in the
  above charging scheme, an element of $R$ may be charged up to $O(k)$
  times, so $\Delta_L = O(k\lambda_s(n))$. Therefore $|X| = |L| + |R|
  + \Delta_L + \Delta_R = O(k\lambda_s(n))$.
\end{proof}

\section{Geometric Preliminaries}
\label{section:preliminaries} We now return to the geometric
problem. Recall that we start with a set $\Gamma_i$ of curves in the
plane, no two intersecting pairwise more than $s$ times. In order to
state our argument, no modifications will be required for curves in
$\Gamma_0$, since only one side of any curve in $\Gamma_0$ can appear
on the boundary of $F$. However some modifications will be needed for
the curves in $\Gamma_1$ and $\Gamma_2$, which we describe below.

Let $a=a(\gamma)$ be the endpoint of a curve $\gamma \in
\Gamma_1$. Let $\gamma^+$ be the directed curve that constitutes the
``right side'' of $\gamma$ oriented from $a$ to infinity and let
$\gamma^-$ be the ``left'' side of $\gamma$ oriented from infinity to
$a$. Let $a=a(\gamma)$ and $b=b(\gamma)$ be the two endpoints of a
curve $\gamma \in \Gamma_2$ that are chosen arbitrarily and fixed. Let
$\gamma^+$ (the ``right'' side) be the directed curve $\gamma$
oriented from $a$ to $b$ and let $\gamma^-$ (the ``left'' side) be
the directed curve $\gamma$ oriented from $b$ to $a$.

\subsection{Associate a sequence with a face}

Let $F_i$ of $A(\Gamma_i)$ be an unbounded face and let $C_i$ be a
connected component of $\bd F_i$. In this subsection, we show
how to associate a sequence of curves with $C_i$.
\begin{itemize}
\item For $A(\Gamma_2)$, we traverse $C_2$, keeping $F_2$ on the
  right.  Let $S_2 = \langle s_1,s_2,\ldots,s_t \rangle$ be the
  circular sequence of oriented curves in $\Gamma_2$ in the order in
  which they appear along $C_2$. If during the traversal we meet the
  curve $\gamma$ with endpoints $a=a(\gamma)$ and $b=b(\gamma)$, and
  follow it from $a$ to $b$ (respectively $b$ to $a$), we add
  $\gamma^+$ (respectively $\gamma^-$) to $S_2$.
\item Observe that for $A(\Gamma_0)$, $C_0$ is not closed---it divides
  the plane into two connected components. This means that $C_0$
  naturally corresponds to a linear sequence of un-oriented curves.
  Again, we traverse it keeping $F$ on the right.  Denote this
  sequence by $S_0$.  $S_1$ is constructed analogously, as a sequence
  of \emph{oriented} curves.
\end{itemize}

We will often abuse the notation slightly. Given a sequence of
curves $S_i$, we will often isolate an alternating subsequence, say
$A = \langle \xi_j, \gamma_j, \xi_{j+1}, \gamma_{j+1}, \ldots
\rangle$, where all $\xi_j$ represent the appearances of the same
curve $\xi$ and all $\gamma_j$ represent the appearances of the same
curve $\gamma$. We will often treat $\xi_j$ and $\gamma_j$ as 
aliases for the edges of $A(\Gamma_i)$ that correspond to those entries
in $S_i$.

\subsection{Preliminary modification for curves in $\Gamma_1$}
\label{sec:prelim2} 
Let $\Sigma_L$ and $\Sigma_R$ be the alphabets consisting of the
left symbols and right symbols, respectively. For notational
purposes, let $S_1^L = ({S_1}_{|\Sigma_L})^\Diamond$ and $S_1^R =
({S_1}_{|\Sigma_R})^\Diamond$. In this
section, we prove a key lemma, which is a variation of the
Circular Consistency Lemma \cite{DS-sharir,DS-book} below, and state an important observation.

\begin{lemma}[Linear Consistency Lemma]\hfill
  \begin{enumerate}[(a)]
  \item The portions of each arc $\xi_i^+$ appear in $S_1^R$ in
    the same order as their order along $\xi_i^+$; analogous statement holds for $S_1^L$.
  \item The portions of each arc $\xi_i$ appear in $S_0$ in the same or reverse order as compared to their
    order along $\xi_i$.
  \end{enumerate}
\end{lemma}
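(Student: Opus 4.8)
The plan is to argue by contradiction, assuming that some arc $\xi$ contributes portions to $S_1^R$ (or to $S_0$) in an order inconsistent with the linear order along $\xi$, and then extract from this a topological contradiction with the fact that we are walking along a single connected component $C_i$ of the boundary of a single face $F_i$. First I would set up the topology carefully: fix the component $C_i$ of $\bd F_i$ and recall that, as noted in the preliminaries, for $i=1$ it is traversed as a linear (non-closed, since the curves are unbounded) sequence keeping $F_i$ on the right, and that the two sides $\xi^+$, $\xi^-$ of an unbounded curve never appear together along $C_i$ because they separate the plane. The key local fact I would isolate is that whenever $C_i$ runs along a sub-portion of the directed arc $\xi^+$, it does so \emph{in the direction of $\xi^+$} — because $F_i$ is kept on the right and $\xi^+$ is, by our orientation convention, exactly the right side of $\xi$.

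The main step is then the following: suppose $p$ and $q$ are two points on $\xi^+$, with $p$ preceding $q$ in the order along $\xi^+$, but such that $C_i$ visits (a neighborhood of) $q$ strictly before it visits (a neighborhood of) $p$. Consider the portion $\beta$ of $\xi^+$ from $p$ to $q$; together with the portion of $C_i$ from its visit at $q$ to its visit at $p$, this closes up into a cycle (using that the tails of $\xi$ run to infinity to handle the non-compact part — this is where semi-infinite versus bi-infinite matters, and the non-closedness of $C_0$ is what forces the ``or reverse'' in part (b)). I would then show this cycle, together with the local ``$F_i$ on the right'' information at $p$ and at $q$, forces $F_i$ to lie on two different sides of $\xi$ near $p$ versus near $q$, or forces $C_i$ to cross $\xi$ in a way that contradicts $C_i \subseteq \bd F_i$ — i.e., a Jordan-curve-theorem argument showing the interior of the cycle both contains and excludes points of $F_i$ arbitrarily close to $\xi$. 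For part (b), the same argument runs with the single un-oriented curve $\xi$ and the linear curve $C_0$; reversing the traversal direction of $C_0$ reverses the order, giving ``same or reverse order.''

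The hard part will be making the topological bookkeeping rigorous and uniform across the three cases — in particular, handling the behavior ``at infinity'' of the unbounded arcs and of the non-closed component $C_0$, and being careful that ``the portion of $C_i$ between two visits'' is well-defined even though $C_i$ may touch $\xi^+$ many times (one should take $p$ and $q$ to be an \emph{adjacent} inconsistent pair in the traversal order, so that the intervening portion of $C_i$ avoids $\xi^+$). I would phrase the contradiction via the standard device: the sub-arc $\beta \subseteq \xi^+$ has $F_i$ consistently on one fixed side along all of $\beta$ (since $F_i$ is always on the right of $\xi^+$), yet the closed curve formed by $\beta$ and the connecting piece of $C_i$ separates a neighborhood of one endpoint of $\beta$ from a neighborhood of the other within $F_i$, contradicting that both lie in the single connected region $F_i$. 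Once this ordering consistency is in hand, no further computation is needed — the statement follows directly.
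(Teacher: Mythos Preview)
Your setup is exactly the paper's: assume two portions of $\xi^+$ appear in $S_1^R$ out of order, and form a closed contour from the sub-arc of $\xi^+$ between them together with the piece of $C_1$ between their two appearances. The only difference is the contradiction you extract. You aim for a Jordan-curve/side argument, showing that the contour separates the $F_1$-neighborhoods of the two points; this works but, as you anticipate, the orientation bookkeeping is delicate (and your claim that ``$F_i$ is on the right of $\xi^+$ along all of $\beta$'' is not literally true, since interior points of $\beta$ need not lie on $\bd F_i$ at all). The paper instead observes that the contour is bounded, while the infinite tail of $\xi^+$ beyond the farther of the two portions starts inside it and can cross neither the $\xi$-piece (same curve) nor the $C_1$-piece (boundary of the face), so an unbounded arc is trapped in a bounded region---done. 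Your own parenthetical about ``tails of $\xi$ running to infinity'' is precisely this shortcut; promoting it from a technicality to the punchline eliminates the side-tracking you flagged as the hard part. Part~(b) is handled the same way in both.
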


\begin{proof}
  We only argue (a), since (b) follows by an almost identical
  argument.
  Let $a$ and $b$ be
  portions of $\xi^+$ that occur in $S_1^R$ in that order. Assume
  that along $\xi^+$, $a$ follows $b$; refer to
  Fig~\ref{fig:ConsistencyLemma}. Denote by
  $\pi$ the portion of $C_1$ connecting $a$ to $b$. Denote
  by $\zeta$ the portion of $\xi^+$ from $b$ to $a$. Now $a\pi b\zeta$
  is a closed contour. It is easy to
  verify that the infinite ``end'' of $\xi^+$ must be enclosed in this
  closed contour, which is a contradiction.
\end{proof}

\begin{figure}
  \centering
  \includegraphics[scale=0.45]{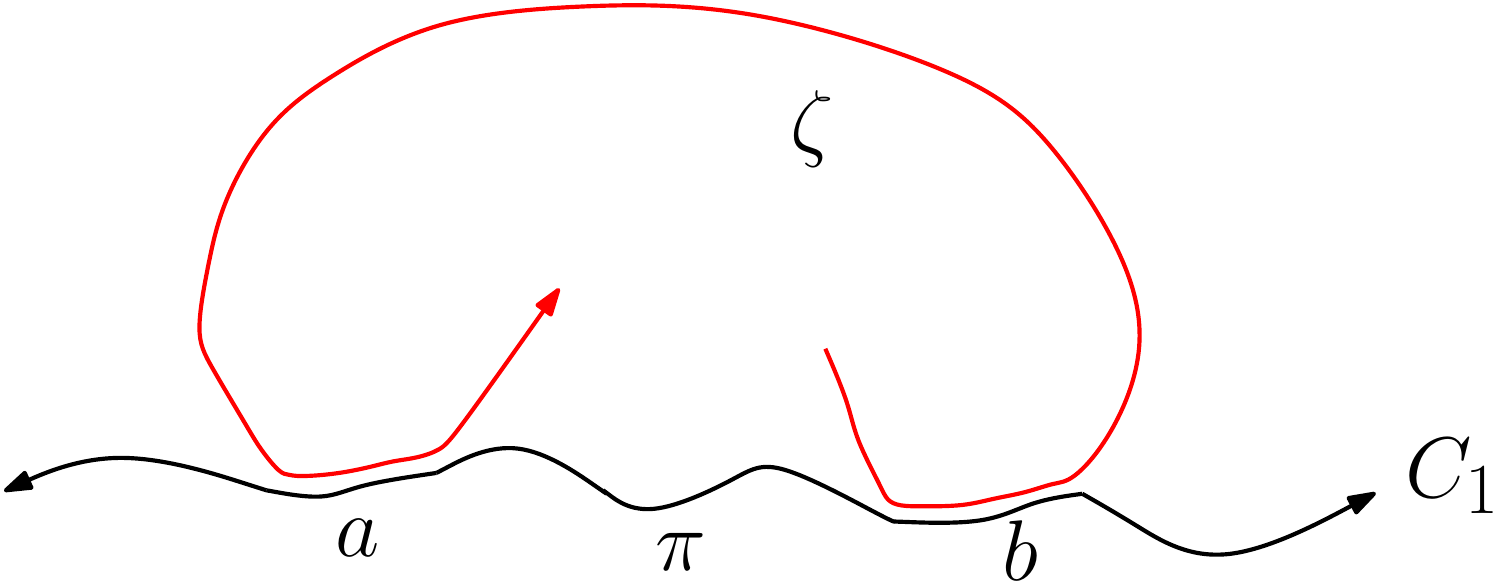}
  \caption{Proof of Linear Consistency Lemma.}
  \label{fig:ConsistencyLemma}
\end{figure}

We now make the following simple but important observation.
It can easily be checked that $S_1$ is $k$-friendly under $(\Sigma_L, \Sigma_R)$, for some $k=O(s)$.
(Indeed, the existence of a contiguous subsequence of the form
$\langle \zeta^-,\xi^+,\zeta^-,\xi^+,\ldots \rangle$ of length $s+2$, where $\zeta^- \in
\Sigma_L$ and $\xi^+ \in \Sigma_R$, would force $s+1$ distinct points
of intersection of $\zeta$ and $\xi$---a contradiction). This observation will be critical 
for the proof of Theorem~\ref{thr:unbounded}.

\subsection{Preliminary modification for curves in $\Gamma_2$} 

We will also need the following lemma.
\begin{lemma}[Circular Consistency Lemma \cite{DS-sharir,DS-book}]
  \label{lem:circular-consistency} 
  The portions of each arc $\xi_i^+$ (respectively $\xi_i^-$) appear
  in $S_2$ in a circular order consistent with their order
  along the oriented $\xi_i^+$ (respectively $\xi_i^-$). That is, there exists a starting point in $S$, 
  which depends on $\xi_i$, such that if we read $S$ in a circular order starting from that point, 
  we encounter these portions in their order along $\xi_i$. 
\end{lemma}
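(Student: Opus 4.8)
The plan is to argue by contradiction, in close analogy with the proof of the Linear Consistency Lemma above; the one structural difference is that $\xi^+$ is now a \emph{bounded} arc, so no ``infinite end'' can be trapped by a closed contour, and accordingly we can only hope for circular --- not linear --- consistency. First I would recast the assertion combinatorially: a cyclic order on a finite set is a rotation of a prescribed linear order if and only if it contains no triple of elements occurring as $u,v,w$ in the linear order but as $u,w,v$ in the cyclic order. Hence it suffices to rule out three portions $u,v,w$ of $\xi^+$ that appear in this order along the oriented $\xi^+$ but in the cyclic order $u,w,v$ along the traversal of $C_2$. The claim for $\xi_i^-$ is identical: by the definitions in Section~\ref{section:preliminaries}, a portion of $\xi^-$ on $C_2$ is traversed in the direction of $\xi^-$, so the same argument applies verbatim with $\xi^+$ replaced by $\xi^-$.

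Assume such a triple exists. Let $p$ be the endpoint of $u$ lying toward $b(\xi)$ along $\xi^+$, and $q$ the endpoint of $w$ lying toward $a(\xi)$; both are points of $\xi^+$ and of $C_2$. Let $\sigma$ be the sub-arc of $\xi^+$ from $p$ to $q$, oriented the way $\xi^+$ is (namely $p\to q$, since $u$ precedes $w$); because $v$ lies between $u$ and $w$ along $\xi^+$, it lies in the interior of $\sigma$, and $\sigma$ contains no part of $u$ or $w$. Of the two arcs of $C_2$ joining $p$ to $q$, let $\tau$ be the one \emph{not} containing $v$. The crux is that the forbidden cyclic order $u,w,v$ is exactly what forces $\tau$ to be the arc running from $p$ to $q$ \emph{with} the $C_2$-traversal (in the consistent configuration it would be the other arc, which runs against the traversal); this is also why two portions would not suffice --- it is the third portion $v$ that pins down the ``wrong'' choice of $\tau$. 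Now $J=\sigma\cup\tau$ is a closed contour. After the routine step of passing to an innermost enclosed sub-contour to absorb the isolated transversal crossings of $\sigma$ with $C_2$ (and any self-contacts of $C_2$) --- the same device that underlies the proof of the Linear Consistency Lemma and Fig~\ref{fig:ConsistencyLemma} --- we may treat $J$ as a simple closed curve, disjoint from the open face $F_2$, so $F_2$ lies entirely on one side of $J$. But traversing $J$ as $\sigma$ from $p$ to $q$ followed by $\tau$ from $q$ back to $p$: along $\sigma$ the face $F_2$ is on the right of $\xi^+$ wherever a portion of $\xi^+$ meets $\bd F_2$ --- in particular near $v\subseteq\sigma$ --- hence on the right of $J$; whereas the return along $\tau$ runs \emph{against} the $C_2$-traversal, so there $F_2$, always on the right of the traversal, lies on the \emph{left} of $J$. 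Thus $F_2$ meets both sides of $J$, contradicting its connectedness.

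The only genuinely delicate point, exactly as in the Linear Consistency Lemma, is the planar-topology bookkeeping behind ``we may treat $J$ as a simple closed curve'': the sub-arc $\sigma$ of $\xi$ may cross $C_2$ at isolated points, the boundary component $C_2$ need not itself be simple, and the endpoints $a(\xi),b(\xi)$ may happen to lie on $C_2$. Each is handled by the standard innermost-contour/local-side argument (optionally after first shrinking the triple $u,v,w$ to one that is consecutive among the portions of $\xi^+$ on $C_2$, which reduces $\sigma\cap\tau$ to the endpoints $p,q$ together with isolated transversal crossings), combined with the reduction --- available after the preliminary geometric modifications --- to a boundary component behaving like a simple closed curve. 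I would factor this machinery out of the Linear Consistency Lemma and reuse it here rather than re-deriving it. Since the Circular Consistency Lemma is classical \cite{DS-sharir,DS-book}, one may alternatively simply invoke it; the sketch above is the self-contained route matching the paper's style.
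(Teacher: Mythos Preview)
The paper does not prove this lemma at all: it is stated with attribution to \cite{DS-sharir,DS-book} and immediately used. So there is no ``paper's own proof'' to compare against; the paper simply invokes the classical result, which is precisely the alternative you mention in your last sentence.

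Your sketch is a reasonable self-contained route and is essentially the standard argument from \cite{DS-sharir,DS-book}: reduce circular consistency to the absence of a bad triple, build a closed contour from a sub-arc of $\xi^+$ and a sub-arc of $C_2$, and derive a side-of-contour contradiction for the connected face $F_2$. The orientation bookkeeping (that $\tau$ runs with the $C_2$-traversal from $p$ to $q$ precisely because the cyclic order is $u,w,v$, so the return leg $q\to p$ runs against it) is correct, and your identification of the delicate step --- turning $J$ into something that genuinely separates the plane --- is accurate. The one place I would tighten is the appeal to an ``innermost sub-contour'' device borrowed from the Linear Consistency Lemma: the paper's proof of that lemma is a two-line argument with no such device made explicit, so if you want a truly self-contained proof you should spell this reduction out rather than cite it as already done. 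Given that the paper itself is content to cite the lemma, simply invoking \cite{DS-sharir,DS-book} is the cleaner choice here.
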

We now perform a cutting of the circular sequence $S_2$ as in
\cite{DS-sharir,DS-book}. Consider $S_2 = \langle s_1,\ldots,s_t
\rangle$ as a linear, rather than a circular sequence by breaking it
at an arbitrary vertex. For each directed arc $\gamma_i$, consider
the linear sequence $V_i$ of all appearances of $\gamma_i$ in $S_2$,
arranged in the order they appear along $\gamma_i$. Let $\mu_i$ and $\nu_i$ denote,
respectively, the index in $S_2$ of the first and of the last
element of $V_i$. For each arc $\gamma_i$, if $\mu_i > \nu_i$, we
split the symbol $\gamma_i$ into two distinct symbols $\gamma_{i1}$
and $\gamma_{i2}$, and replace all appearances of $\gamma_i$ in
$S_2$ between $\mu_i$ and $t$ (respectively, between 1 and $\nu_i$)
by $\gamma_{i1}$ (respectively, by $\gamma_{i2}$). Notice that by
Lemma~\ref{lem:circular-consistency},  
we are able to split $\gamma_i$ into two subarcs such that $\gamma_{i1}$ represents the 
appearances of the first subarc and $\gamma_{i2}$ represents the appearances of the second subarc. 
This splitting produces a sequence, of the same length as $S_2$ on the alphabet of
at most $4n$ symbols. To simplify the notation, hereafter we refer
to this new linear sequence as $S_2$.

\medskip

To summarize:
\begin{itemize}
\item We did not modify $S_0$. It is a linear sequence of curves.
\item $S_1$ is a linear sequence of oriented curves, from which we
  have derived two subsequences, $S_1^L$ and $S_1^R$.
\item After the cutting procedure, $S_2$ is a
  linear sequence of oriented curves.
\end{itemize}

To arrive at our first geometric theorem, we need the following
lemma \cite{DS-sharir,DS-book}.

\begin{lemma}[Quadruple Lemma \cite{DS-sharir,DS-book}] \label{lem:quadruple}
  Consider a quadruple of consecutive elements in a fixed alternating
  subsequence of $S_2$. Let this quadruple be $\langle
  \xi_1,\gamma_1,\xi_2,\gamma_2 \rangle$, such that $\xi_i$ and $\gamma_j$
  constitute portions of curves $a$ and $b$, respectively. Let $\pi_a$ be the
  portion of $a$ connecting $\xi_1$ to $\xi_2$ and let $\pi_b$ be the portion of $b$ connecting $\gamma_1$ to $\gamma_2$.
  Then $\pi_a$ and $\pi_b$ must intersect. Furthermore, this point of intersection is distinct
  for each such quadruple in this subsequence.
\end{lemma}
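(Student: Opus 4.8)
\emph{Plan.} The plan is to reduce the statement to a Jordan-curve argument about the single face $F_2$, using the Circular Consistency Lemma to control orientations and the convention that $C_2$ is traversed with $F_2$ on the right to control which side of everything $F_2$ lies on. First I fix points $p_1,p_2$ in the relative interiors of the edges $\xi_1,\xi_2$ and points $q_1,q_2$ in the relative interiors of $\gamma_1,\gamma_2$; since $a\neq b$ meet in finitely many points I may take $q_1,q_2\notin a$. As the quadruple is read off $C_2$ in the order $\langle\xi_1,\gamma_1,\xi_2,\gamma_2\rangle$, these four points occur in this cyclic order along $C_2$, and at each of them $F_2$ lies locally on the right of the direction in which $C_2$ is traversed. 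By the Circular Consistency Lemma the appearances of $\xi$ along $C_2$ are ordered as along the oriented arc $\xi$, and since $S_2$ has been cut, $\xi_1$ precedes $\xi_2$ in it; hence $\pi_a$ runs from $p_1$ to $p_2$ in the direction of $\xi$, so $F_2$ lies locally on the right of $\pi_a$ both at $p_1$ and at $p_2$, and near each of these points $\pi_a$ coincides with $C_2$ (with the edge $\xi_1$, resp.\ $\xi_2$). The symmetric statements hold for $\pi_b$, $\gamma$, $q_1$, $q_2$.

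\emph{Intersection.} Suppose for contradiction that $\pi_a\cap\pi_b=\emptyset$. Since $F_2$ is open and path-connected, there is a simple arc $\beta$ from $p_1$ to $p_2$ whose interior lies in $F_2$ and which leaves $p_1$ and enters $p_2$ on the $F_2$-side of $\pi_a$; thus $\beta$ avoids all curves of $\Gamma_2$ except at its endpoints, and $J:=\pi_a\cup\beta$ is a Jordan curve. The crucial claim is that $\beta$ can be chosen so that $J$ \emph{separates} $q_1$ from $q_2$; this is where one uses that $C_2$ is traversed with $F_2$ on the right and that $\pi_a$ leaves $p_1$ and enters $p_2$ in the traversal direction, so that the two sub-walks of $C_2$ between $p_1$ and $p_2$ --- the one through $q_1$ and the one through $q_2$ --- emanate from $p_1$ and $p_2$ into opposite sides of $\pi_a$, forcing $q_1$ and $q_2$ into different components of $\real^2\setminus J$. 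Granting this, $\pi_b$ joins $q_1$ to $q_2$ across $J$, hence meets $J$; but $\pi_b\subseteq\bigcup\Gamma_2$ is disjoint from $\beta\setminus\{p_1,p_2\}$, while $q_1,q_2\notin\pi_a$, so $\pi_b$ in fact meets $\pi_a$ --- a contradiction. Hence $\pi_a$ and $\pi_b$ intersect.

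\emph{Distinctness.} Let $w\in\pi_a\cap\pi_b$ be an intersection point associated with the quadruple. Then $w\in a\cap b$ is a vertex of $A(\Gamma_2)$, hence interior to no edge; since $w\in\pi_a$ and $\pi_a$ runs from $p_1\in\xi_1$ to $p_2\in\xi_2$ along $\xi$, the edge $\xi_1$ lies strictly before $w$ along $\xi$ and $\xi_2$ strictly after it. As the quadruple is consecutive within the fixed alternating subsequence, $\xi_1$ is therefore the last appearance of $\xi$ in that subsequence preceding $w$ and $\xi_2$ the first one following it; by the Circular Consistency Lemma this order in $S_2$ agrees with the order along $\xi$, so $\xi_1$ and $\xi_2$ are uniquely determined by $w$. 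The same argument over $b$ determines $\gamma_1$ and $\gamma_2$. Thus each quadruple of the fixed subsequence is determined by its associated intersection point, so distinct quadruples receive distinct points.

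\emph{Main obstacle.} The hard part is the bracketed claim in the intersection step --- that $\beta$ can be chosen so that $J=\pi_a\cup\beta$ separates $q_1$ from $q_2$. This genuinely requires an argument, since $F_2$ need not be simply connected, and it is exactly the point at which the ``$F_2$ on the right'' convention together with the Consistency Lemma do real work (one essentially builds the separating $\beta$ as a push-off of an appropriate sub-walk of $C_2$ into $F_2$ and verifies the resulting Jordan curve has $q_1$ and $q_2$ on opposite sides); the remaining steps are routine applications of the Jordan curve theorem and orientation bookkeeping.
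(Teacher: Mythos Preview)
The paper does not prove this lemma; it is quoted from \cite{DS-sharir,DS-book} and used as a black box, so there is no in-paper argument to compare against. Your outline follows the standard route taken in those references: build a Jordan curve from $\pi_a$ together with an arc through the face, argue that $q_1$ and $q_2$ lie on opposite sides, and conclude that $\pi_b$ must cross $\pi_a$; then derive distinctness from the (circular/linear) consistency of appearances along each curve. Both parts are correct in substance.

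The step you flag as the ``main obstacle'' is indeed the only non-routine one, and your sketch is right in spirit but thin. Since $F_2$ need not be simply connected, an arbitrary simple arc $\beta\subset F_2$ from $p_1$ to $p_2$ need not produce a separating $J$; you really do need the specific $\beta$ obtained as a small push-off of the sub-walk of $C_2$ from $p_1$ to $p_2$ through $q_1$ (or through $q_2$), and you need to say why that push-off can be taken simple and why the resulting $J$ has $q_1$ and $q_2$ on opposite sides. In the cited references this is usually phrased dually: one takes $\theta$ to be the portion of $C_2$ from $\gamma_1$ to $\gamma_2$ passing through $\xi_2$, forms the closed (possibly non-simple) contour $\theta\cup\pi_b$, observes that $F_2$ lies locally to one side of $\theta$ near $\xi_2$ while $\xi_1$ lies in the complementary region, and concludes that $\pi_a$ must exit across $\pi_b$. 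Either formulation works; just be explicit about which side of $J$ (or of $\theta\cup\pi_b$) each object lies on, using the ``$F_2$ on the right'' convention.

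Your distinctness argument is clean and correct: by consistency (and the cutting of $S_2$) the appearances $\xi_1,\xi_2,\ldots$ occur along $a$ in the same order as in the subsequence, so any $w\in\pi_a\cap\pi_b$ lies strictly between the $\xi_1$- and $\xi_2$-portions of $a$ and strictly between the $\gamma_1$- and $\gamma_2$-portions of $b$, which pins down the quadruple.
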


Although stated for $S_2$, the Quadruple Lemma also holds for $S_0$,
$S_1^R$, and $S_1^L$.

\section{Complexity of an Unbounded Face}
\label{section:unbounded}
\begin{theorem} \label{thr:unbounded} The complexity of an unbounded
  face~$F$ in an arrangement of~$n$ (0)~bi-infinite,
  (1)~semi-infinite, or (2)~bounded Jordan curves, no pair of which
  crosses more than $s$~times, is $O(\lambda_{s}(n))$,
  $O(\lambda_{s+1}(n))$, $O(\lambda_{s+2}(n))$, respectively.
\end{theorem}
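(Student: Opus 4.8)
The plan is to handle the three cases (0), (1), (2) in parallel, using the combinatorial Fact~\ref{fact:combination} as the engine that glues together two Davenport--Schinzel sequences into one bound. First I would fix an unbounded face $F=F_i$ and a connected component $C=C_i$ of $\bd F_i$, together with its associated sequence $S_i$ constructed in Section~\ref{section:preliminaries}. Since $F$ is unbounded, its boundary has at most one ``relevant'' component reaching infinity in each case (in the bi-infinite case $C_0$ is itself a non-closed arc, so there is nothing circular to worry about), and the contribution of all bounded components of $\bd F$ can be charged separately by a standard argument or absorbed into the main bound; the heart of the matter is bounding $|S_i|$. The complexity of $F$ along $C$ is, up to a constant factor, the number of edge-appearances in $S_i$, so it suffices to prove $|S_i| = O(\lambda_s(n))$, $O(\lambda_{s+1}(n))$, $O(\lambda_{s+2}(n))$ respectively.

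Next I would establish that the relevant derived sequences are Davenport--Schinzel sequences of the right order. For case (0): only one side of each bi-infinite curve can bound $F$, so $S_0$ uses $n$ symbols; using the Quadruple Lemma together with the Linear Consistency Lemma~(b), one shows that an alternation $\langle \xi,\gamma,\xi,\gamma,\ldots\rangle$ of length $s+2$ in $S_0$ would force $s+1$ distinct intersection points of $\xi$ and $\gamma$, so after collapsing equal consecutive symbols $S_0^\diamond$ is a $DS(n,s)$-sequence, giving $|S_0|=O(\lambda_s(n))$ directly (here there are no consecutive repetitions to worry about, since $C_0$ is a simple arc). For case (2): after the cutting procedure $S_2$ lives on $\le 4n$ symbols, and the Quadruple Lemma plus the Circular Consistency Lemma shows that an alternation of length $s+4$ would produce $s+2$ distinct intersections of two original curves (the cutting costs two in the order), so $S_2$ is a $DS(4n,s+2)$-sequence, hence $|S_2|=O(\lambda_{s+2}(4n))=O(\lambda_{s+2}(n))$.

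The interesting case is (1), the semi-infinite curves, where I would invoke Fact~\ref{fact:combination} with $\Sigma_1=\Sigma_L$, $\Sigma_2=\Sigma_R$, and $X=S_1$. I have already observed in Section~\ref{sec:prelim2} that $S_1$ is $k$-friendly under $(\Sigma_L,\Sigma_R)$ for some $k=O(s)$, and that $C_1$ being a simple arc means no two consecutive symbols of $S_1$ are equal. It remains to check the third hypothesis: that $(S_1{}_{|\Sigma_L})^\diamond = S_1^L$ and $(S_1{}_{|\Sigma_R})^\diamond = S_1^R$ are each $DS(n,s)$-sequences. This is where the Linear Consistency Lemma~(a) and the Quadruple Lemma (applied to $S_1^R$ and $S_1^L$, as remarked after the Quadruple Lemma) do the work: an $(s+2)$-term alternation $\langle \xi^+,\gamma^+,\xi^+,\gamma^+,\ldots\rangle$ inside $S_1^R$ yields, by the Quadruple Lemma, $s+1$ pairwise-distinct crossings of the curves carrying $\xi^+$ and $\gamma^+$ --- but consistency (a) guarantees the portions $\pi_a,\pi_b$ glued across consecutive quadruples are themselves traversed monotonically, so these crossing points are genuinely distinct and all lie on $\xi\cap\gamma$, contradicting the $s$-intersection bound. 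Hence $S_1^L,S_1^R$ are $DS(n,s)$-sequences, Fact~\ref{fact:combination} applies, and $|S_1| = O(k\lambda_s(n)) = O(s\lambda_s(n)) = O(\lambda_{s+1}(n))$ for fixed $s$ (using the standard inequality $s\,\lambda_s(n) = O(\lambda_{s+1}(n))$, or simply treating $s$ as a constant as in the theorem statement).

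The main obstacle I anticipate is the bookkeeping needed to reduce the complexity of $F$ to $|S_i|$ cleanly and to argue that the hypotheses of Fact~\ref{fact:combination} really hold for $S_1$ --- in particular, verifying carefully that collapsing consecutive equal symbols in $S_1{}_{|\Sigma_L}$ and $S_1{}_{|\Sigma_R}$ does not destroy the $DS$ property and that the Quadruple Lemma's ``distinct intersection point'' conclusion survives the restriction to same-side subsequences. Once the consistency lemmas are in hand, the $k$-friendliness observation is elementary, and the rest is an application of the combinatorial fact; the unified treatment of all three cases comes precisely from the fact that (0) and (2) are the degenerate ``one-sided'' instances where $\Sigma_1$ or $\Sigma_2$ is empty (so $k$ can be taken to be $0$ or the friendliness condition is vacuous) and only the DS-order changes.
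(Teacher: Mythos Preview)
Your overall architecture matches the paper's, but there is a genuine counting error in case~(1) that invalidates the argument. The Quadruple Lemma, applied to an alternating subsequence of length~$l$, produces only $l-3$ distinct crossings (one per consecutive quadruple), not $l-1$. So an $(s+2)$-term alternation in $S_1^R$ yields only $s-1$ crossings of $\xi$ and $\gamma$, not $s+1$; there is no contradiction and you cannot conclude that $S_1^R$ is a $DS(n,s)$-sequence. Indeed, if your claim were correct, Fact~\ref{fact:combination} would give $|S_1|=O(\lambda_s(n))$, which is impossible: the lower-envelope construction for semi-infinite $x$-monotone curves already realises complexity $\Theta(\lambda_{s+1}(n))$.

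What the paper actually does in case~(1) is squeeze out \emph{one} additional crossing from the \emph{last} quadruple of the alternation, by a topological argument you have not supplied: the closed contour formed by the boundary portion $\theta$ between $\gamma_i$ and $\gamma_{i+1}$ together with the arc $\pi\subset b$ encloses~$\xi_{i+1}$, and the semi-infinite end of $a$ starting at $\xi_{i+1}$ must escape this contour, hence must hit~$\pi$ at a point not yet counted (Linear Consistency guarantees it is new). This upgrades $l-3$ to $l-2$, so an alternation of length $s+3$ is forbidden and $S_1^R$ is a $DS(2n,s+1)$-sequence; Fact~\ref{fact:combination} then gives $|S_1|=O(\lambda_{s+1}(n))$. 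The same miscount appears in your case~(0): there the paper gains \emph{two} extra crossings (one from each infinite end, via the first and the last quadruple), turning $l-3$ into $l-1$ and making $S_0$ a $DS(n,s)$-sequence. Your sketch attributes this directly to the Quadruple Lemma plus consistency, but the extra-escape argument is precisely the missing idea.
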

\begin{proof}
  Since $\lambda_s(n)$ is at least linear, and no curve can appear on
  several connected components of $\bd F$, we can consider each
  component $C$ separately and assume that all of the curves appear on
  it. Now consider the sequences $S_0$ (case 0), $S_1^R$ (case 1), and
  $S_2$ (case 2). We claim that these are $DS(n,s)$-, $DS(2n,s+1)$-,
  and $DS(4n,s+2)$-sequences, respectively.

  Let $S$ be the sequence in question.  We aim to argue that it is a $DS$-sequence. By construction, $S$ does not contain
  any consecutive identical elements. Assume that it has an
  alternating subsequence of length $l$. Let this subsequence be $A =
  \langle \xi_1,\gamma_1,\xi_2,\gamma_2,\ldots \rangle$,
  such that $\xi_i$ and $\gamma_j$ constitute portions of curves $a$ and
  $b$, respectively.  We argue that $l$ cannot be too large.

  By Lemma~\ref{lem:quadruple}, consecutive quadruples of $A$ force
  $l-3$ distinct crossings between $a$ and $b$. 
  \begin{itemize}
  \item In case 2, setting $l = s+4$ forces $l-3 = s+1$
  distinct intersections between $a$ and $b$. This is a
    contradiction. Thus no such subsequence exists, and $S_2$ is a
    $DS(4n,s+2)$-sequence, implying that the
    complexity of $F$ is $O(\lambda_{s+2}(n))$.
  \item In case 1, setting $l = s+3$ forces $l-3 = s$ distinct
    intersections between $a$ and $b$.  Now, consider the last
    quadruple in $A$. Without loss of generality, let it be $\langle
    \xi_i, \gamma_i, \xi_{i+1}, \gamma_{i+1} \rangle$. Let $\pi$ be
    the portion of $b$ connecting $\gamma_i$ to $\gamma_{i+1}$. We
    claim that there must be an additional intersection between $a$
    and $b$ at a point on $\pi$ that we have not accounted for, so
    far.  Let $\theta \supset \xi_{i+1}$ be the portion of $C$
    connecting $\gamma_i$ to $\gamma_{i+1}$; $\pi \cup \theta$ is a
    closed contour.  Refer to Figure~\ref{fig:Semi-Infinite}. Now
    traverse $a$ in the infinite direction starting from
    $\xi_{i+1}$. Since $a$ cannot cross $\theta$, during the
    traversal, $a$ must intersect $b$ at a point on $\pi$. By the
    Linear Consistency Lemma, however this intersection has not been
    accounted for and thus there are at least $s+1$ distinct
    intersections between $a$ and $b$. This, of course, is a
    contradiction and therefore $S_1^R$ is a $DS(2n,s+1)$-sequence. By
    Theorem~\ref{fact:combination} and the discussion of
    Section~\ref{sec:prelim2}, $|S_2| = O(\lambda_{s+1}(n))$, implying
    that the complexity of $F$ is $O(\lambda_{s+1}(n))$.

    \begin{figure}
      \centering
      \includegraphics[scale=0.45]{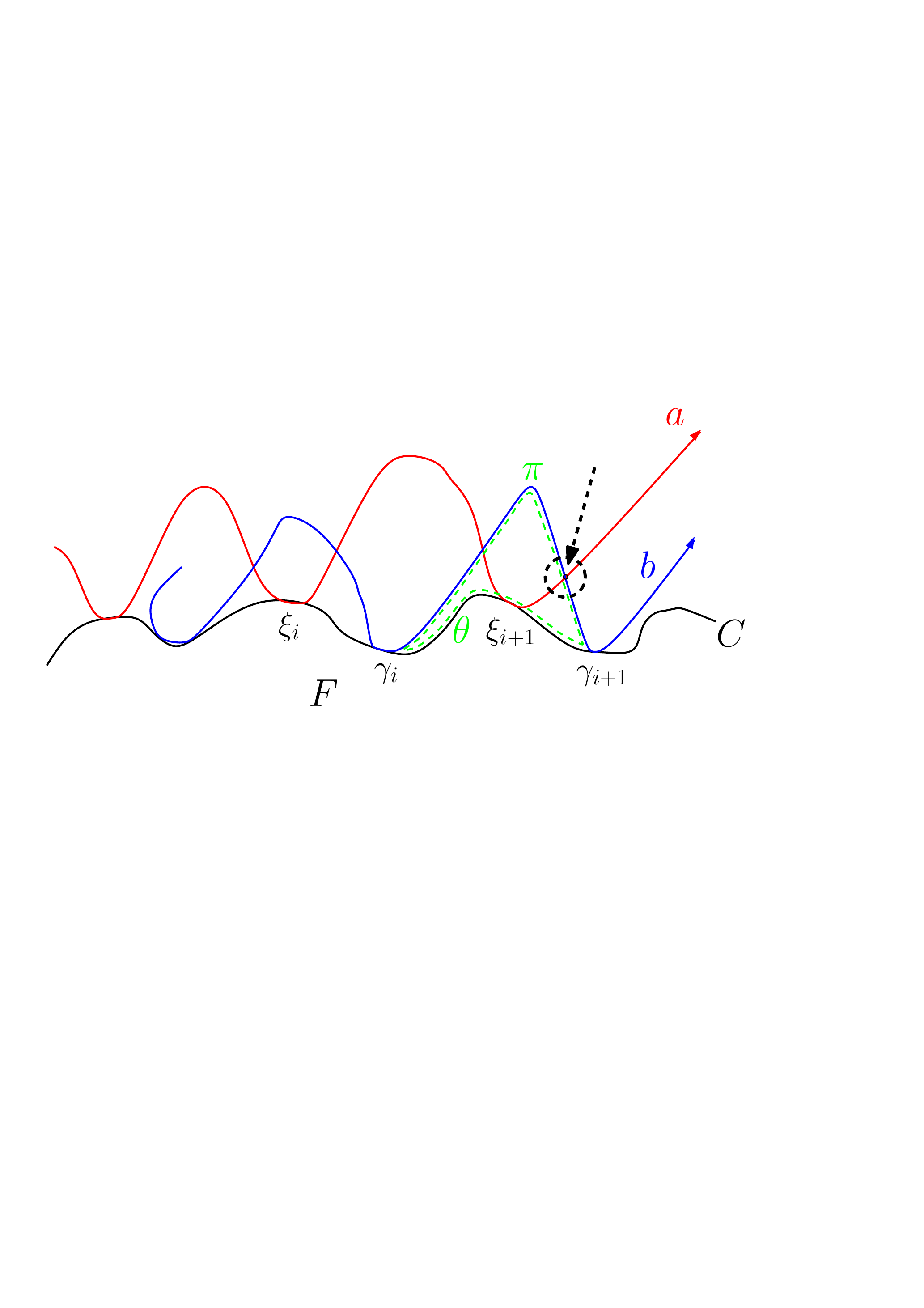}
      \caption{Semi-infinite curves.}
      \label{fig:Semi-Infinite}
    \end{figure}
  \item In case 0, setting $l = s+2$ forces $(s+2)-3 = s-1$ distinct
    intersections between $a$ and $b$.  Now, consider the first
    and last quadruples of $A$. By the same argument as
    in case 1, there are now two additional distinct intersections between
    $a$ and $b$, for a total of at least $s+1$ distinct
    intersections---a contradiction. Thus
    $S_0$ is a $DS(n,s)$-sequence and the combinatorial complexity of
    $F$ is $\lambda_{s}(n)$. Refer to Figure~\ref{fig:Biinfinite}.
    \qedhere 

    \begin{figure}
      \centering
      \includegraphics[scale=0.45]{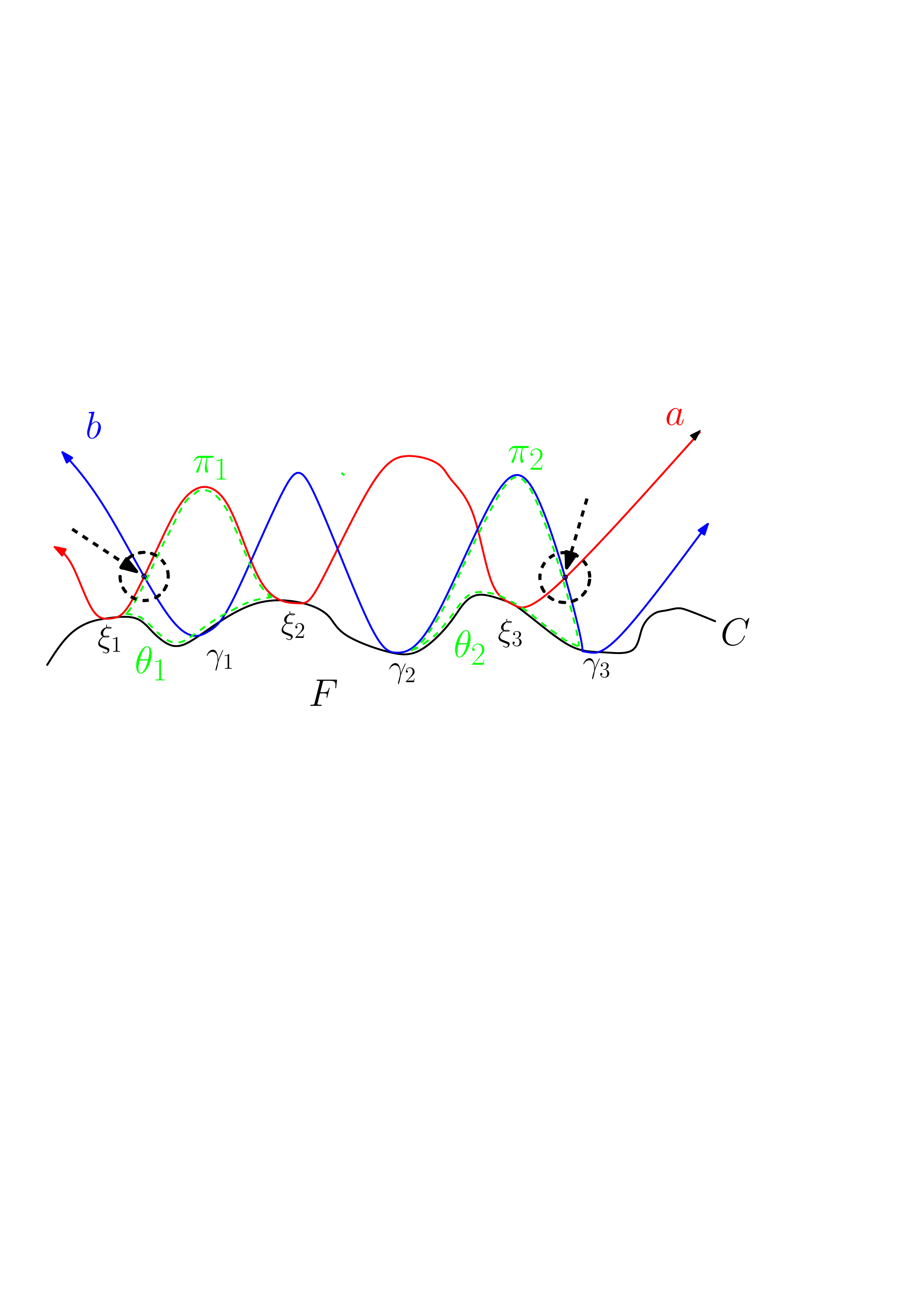}
      \caption{Bi-infinite curves.}
      \label{fig:Biinfinite}
    \end{figure} 
  \end{itemize}
\end{proof}

\section{Complexity of an Arbitrary Face} \label{section:arbitrary}

The next step is to generalize our results to bounded faces. We will
transform the problem while only increasing the complexity of a face
$F$, to make $F$ unbounded, and then apply our results from the
previous section. More precisely, we will build a tunnel from $F$ 
to the ``outside'', so that after the transformation, the new face 
will be part of an unbounded face of the arrangement.



\begin{proof}[Proof of Theorem~\ref{theorem:main}.]
  If $F$ is unbounded, we are done by Theorem~\ref{thr:unbounded}.
  Consider a bounded face $F$ in an arrangement of (a) bi-infinite,
  (b) semi-infinite, or (c) bounded Jordan curves. We assume that all 
  the curves appear on $\bd F$, otherwise one or more
  curves can be deleted without affecting the complexity of $F$.
  Furthermore since $\lambda_s(n)$ is at least linear, it is sufficient to argue the complexity of just
  one connected component of $\bd F$. Thus without loss of generality, we can assume that $\bd F$ is connected.
  \begin{description}
  \item[\emph{Step 1: Finding the site for the tunnel}] \mbox{}\\[-\parsep]
    \nopagebreak
  \begin{itemize}
  \item In cases (a) and (b), pick an arbitrary infinite edge of the
	arrangement, say of curve $a$ and follow it until it first meets
	$\bd F$, say at point $p$, where it meets curve $b$.  Denote this portion of $a$, from
	infinity to $p$, by $\zeta$; $\zeta$ is the future site for our tunnel. Refer to Figure~\ref{fig:InitialCut}(left).
    \begin{figure}
      \centering
      \includegraphics[scale=0.45]{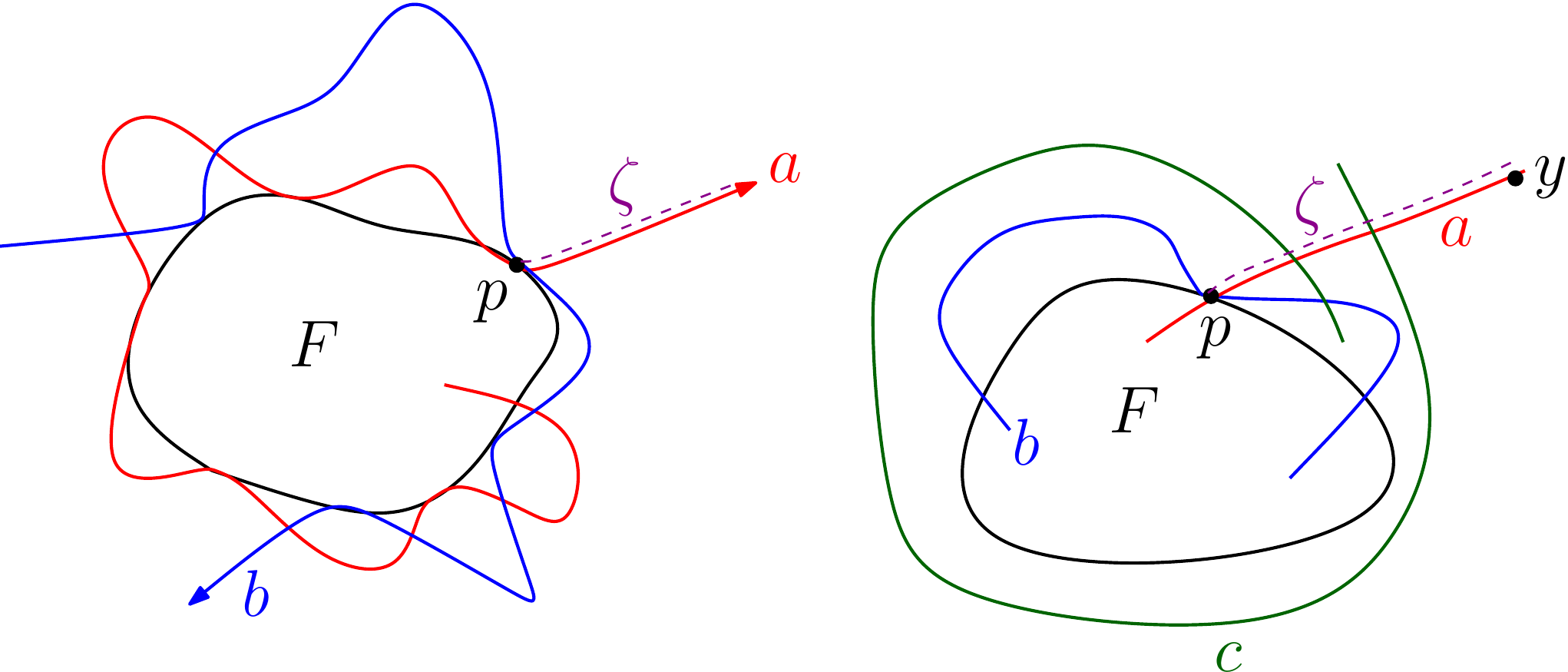}
      \caption{Finding the initial cut. $\Gamma_1$ on the left and $\Gamma_2$ on the right.}
      \label{fig:InitialCut}
    \end{figure}
  \item In case (c), if an endpoint of a Jordan arc lies on the boundary of
	the unbounded cell, we start at this point.  Otherwise, we pick an
	arbitrary edge of the infinite cell of the arrangement, cut the curve
	containing this edge into two curves, and move them slightly
        apart; this increases the number of curves by one.
	In both cases, we now have an endpoint $y$ of a curve $a$.
	Now follow $a$ from $y$ to its first point of intersection $p$ with $\bd F$,
	where it meets curve $b$. Denote this portion of $a$, from $y$
        to $p$, by $\zeta$; $\zeta$ is the future site for our
        tunnel. Refer to Figure~\ref{fig:InitialCut}(right). 
  \end{itemize}
  \item[\emph{Step 2: Digging the tunnel}] 
	We now ``dig a tunnel'' along $\zeta$ from $p$ to its ``infinite end''.
	Namely, at each intersection of $a$ with another curve $c$ of $\Gamma_i$,
	we split $c$ into two new curves, and leave a small gap between the two
	resulting curves, for $a$ to pass through (see Figure~\ref{fig:Extension}).
        \begin{figure}[h]
          \centering
          \includegraphics[scale=0.45]{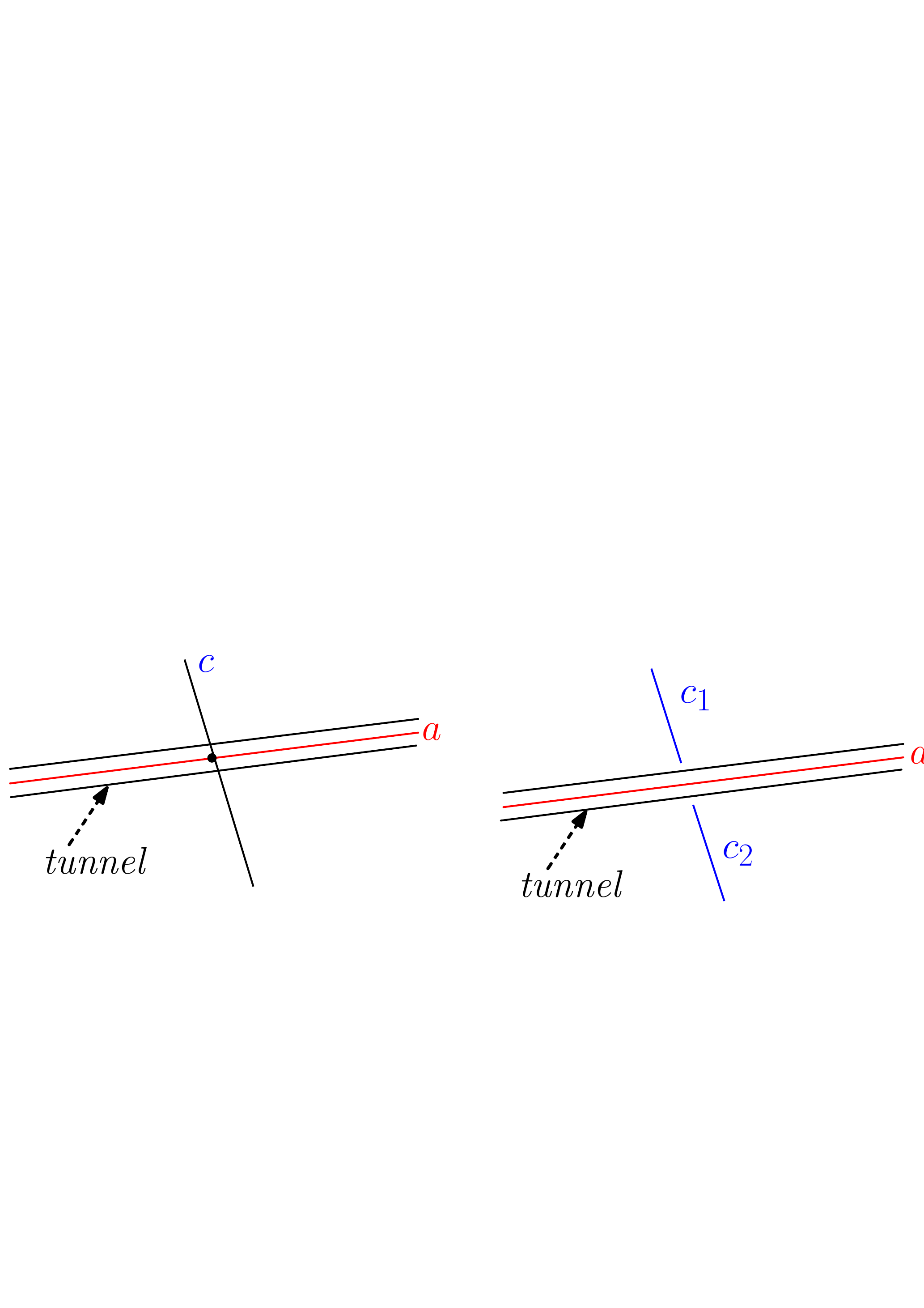}
          \caption{Extending the tunnel.  The picture before (left) and after (right).}
          \label{fig:Extension}
        \end{figure}
  \end{description}
  By construction, during our traversal of $\zeta$, $a$ did not meet
  $F$ again. Thus, as a result of our transformation we have only
  enlarged $F$, increased its complexity, and connected it to an
  infinite face. Notice that no new intersections are created. Namely,
  the resulting curves do not self-intersect and if the curves in the
  original problem intersected pairwise no more than $s$ times, then
  none of the newly created curves will intersect pairwise more than
  $s$ times. The number of curves in the resulting picture is at most
  $1 + (s+1)(n-1) = O(sn)$, if we did not have to cut at $y$; the
  remaining case is similar.
		
  We are almost done --- the trouble is that in case~(a), by splitting
  an existing curve, we cut a bi-infinite curve into semi-infinite
  curves or even finite sections; similar complications arise in
  case~(b). In case~(a), we fix this by extending infinite
  non-crossing "tails" along $a$ to infinity in such a way that they
  follow infinitesimally close to $a$ but do not cross pairwise.
  Case~(b) is handled analogously. At each split $p$, we would add one
  infinite ``tail'' to the finite sections, which were created as a
  result of the original split. Refer to Figure~\ref{fig:NewFinalCut}.
  \begin{figure}
    \centering
    \includegraphics[scale=0.45]{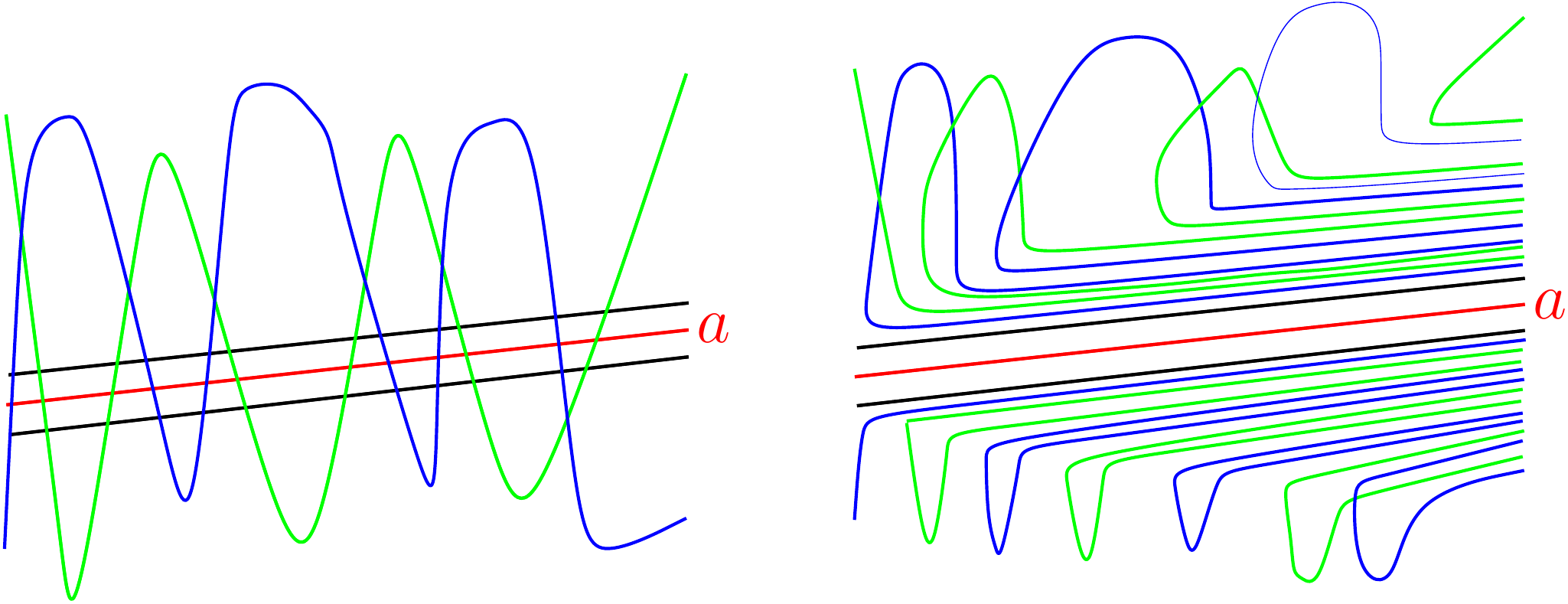}
    \caption{Fixing the tunnel. The picture before (left) and after
      (right) in the case of $\Gamma_0$, magnified.}
    \label{fig:NewFinalCut}
  \end{figure}
  We note that $\lambda_s(kn) = O(\lambda_s(n))$ for any constants $s$
  and $k$. Thus by Theorem~\ref{thr:unbounded}, the complexity of $F$
  in cases (a), (b), and (c) is $O(\lambda_s(n))$,
  $O(\lambda_{s+1}(n))$, and $O(\lambda_{s+2}(n))$, respectively.
\end{proof}

\bibliographystyle{alpha}
\small
\parsep 0pt
\bibliography{current}

\end{document}